\newcommand{\dx}{\dot{x}}
\newcommand{\Ob}{\mathcal{O}}
\newcommand{\sgn}{\text{\normalfont sgn}}
\theoremstyle{plain}
\newtheorem{theorem}{Theorem}
\newtheorem{remark}{Remark}
\newtheorem{lemma}{Lemma}
\newtheorem{corollary}{Corollary}
\theoremstyle{definition}
\newtheorem{definition}{Definition}
\title{\LARGE \bf
%On the Identification of Network Structure: Impact of Observability 
Identifying Network Structure of Linear Dynamical Systems: Observability and Edge Misclassification
}
\author{Jaidev Gill and Jing Shuang (Lisa) Li% <-this % stops a space
\thanks{J.G. and J.S.L. are with the Department of Electrical Engineering and Computer Science, University of Michigan, Ann Arbor, MI, 48109, USA.
        {\tt\small \{jaidevg, jslisali\}@umich.edu}.}%
}
\begin{document}

\maketitle
\thispagestyle{empty}
\pagestyle{empty}

%%%%%%%%%%%%%%%%%%%%%%%%%%%%%%%%%%%%%%%%%%%%%%%%%%%%%%%%%%%%%%%%%%%%%%%%%%%%%%%%
\begin{abstract}

This work studies the limitations of uniquely identifying the structure (i.e., topology) of a networked linear system from partial measurements of its nodal dynamics.
In general, many networks can be consistent with these measurements; this is a consideration often neglected by standard network inference methods. We show that the space of these networks are related through the nullspace of the observability matrix for the true network. We establish relevant metrics to investigate this space, including an analytic characterization of the most structurally dissimilar network that can be inferred, as well as the possibility of mis-inferring presence or absence of edges. In simulations, we find that when observing over 6\% of nodes in random network models (e.g., Erd\H os-R\' enyi and Watts-Strogatz), approximately 99\% of edges are correctly classified. Extending this discussion, we construct a family of networks that keep measurements $\epsilon$-close to each other, and connect the identifiability of these networks to the spectral properties of an augmented observability Gramian.  

\end{abstract}

%%%%%%%%%%%%%%%%%%%%%%%%%%%%%%%%%%%%%%%%%%%%%%%%%%%%%%%%%%%%%%%%%%%%%%%%%%%%%%%%
\section{Introduction and Motivation}
Inferring causal relationships between nodes in networks (e.g., regions in the brain) from time series measurements is broadly relevant to the engineering and scientific community \cite{STEPANIANTS_2020, BELAUSTEGUI_2024, MONTANARI_2020, YU_2007, CASTI_2023}. Various inference methods have been proposed to accomplish this task, but they typically yield varying predictions \cite{STEPANIANTS_2020}. Specifically, given some time series measurements of the nodal dynamics, different methods will infer different network structures (i.e., topologies); these are consistent with the measurements but often different from each other. Moreover, these methods typically assume that every node in the network can be measured and perturbed, an assumption that is often violated in practice for large networks where only a subset of nodes are accessible.

This question is particularly relevant for the field of neuroscience, where there is a concerted effort to understand the brain's network structure (dubbed the \textit{connectome}) based on time series measurements of brain regions and neurons/neural populations.
%and how this structure influences its function.
A key question in the field is how the network structure of the brain influences its function.
However, identifying network structure can prove challenging, as many different structures can exhibit similar behaviors \cite{LI_2025, MARDER_2011}.
This challenge can be partially addressed by systems theory literature. 
The study of parameter identifiability typically employs an input-output view \cite{GREWAL_1976, DISTEFANO_1980}; these have been particularized to linear network models \cite{WEERTS_2018}, which has led to frameworks to understand how partially observed nodes impact identifiability of these systems \cite{HENDRICKX_2019, BAZANELLA_network_2019}.
However, existing work typically makes  engineering-centric assumptions, e.g.,  assuming that we can excite all or most nodes with known perturbations.  
This is clearly infeasible in the neuroscience setting; most of the time even single perturbations are infeasible or prohibitively costly.
The focus of this paper is to study the problem of network structure identification for linear dynamical systems under passive partial measurements.\footnote{For results on network structure identification in the nonlinear setting, see companion paper \cite{GILL_2025}.}. 

\begin{figure}[t!]
      \centering
      \includegraphics[clip, trim=2.1cm 2.5cm 4.3cm 2.55cm,width = \linewidth]{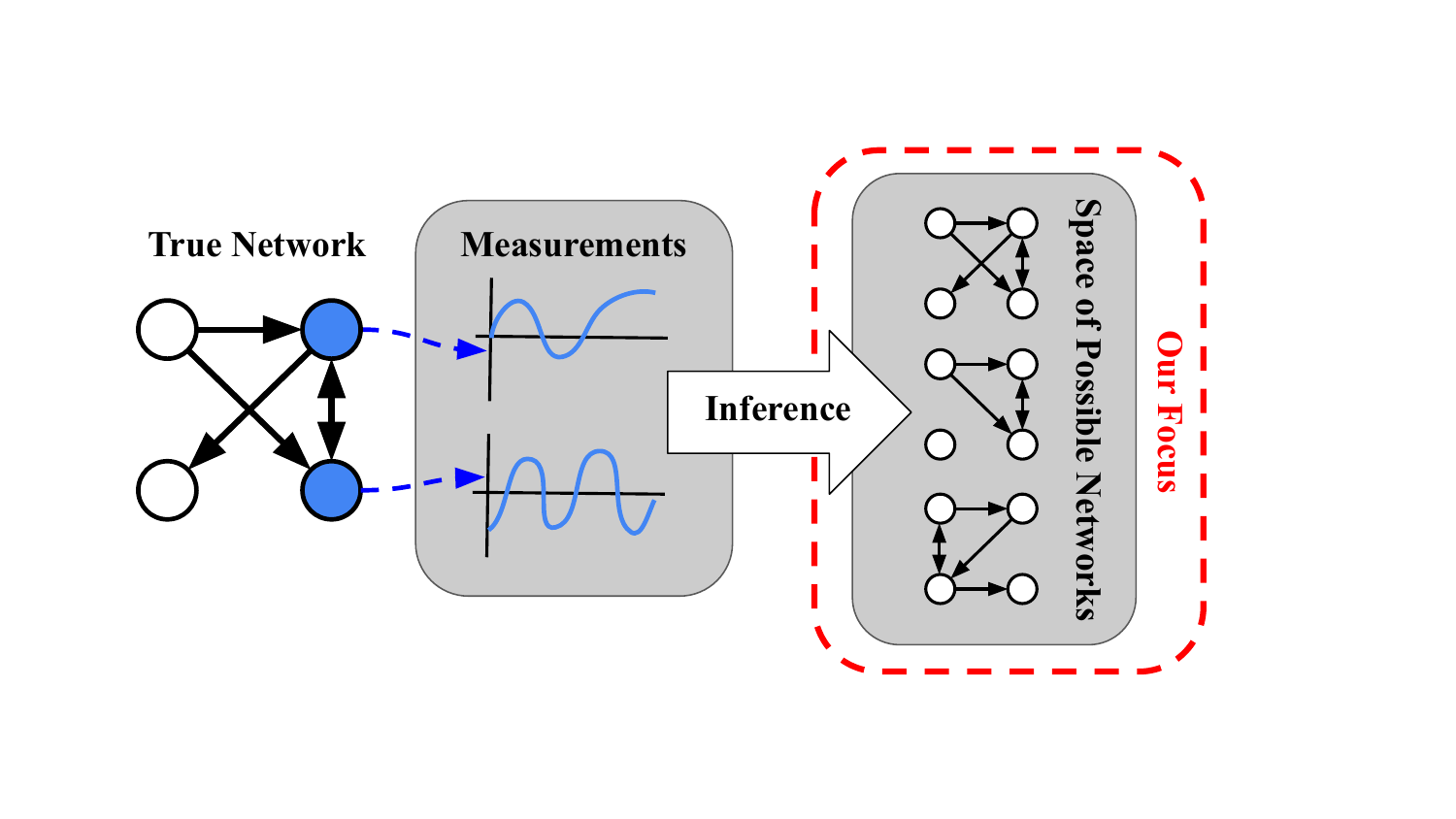}
      \caption{Given time series measurements of a subset of nodes in a network, inference methods often predict different networks and structures. This paper describes the set of possible networks consistent with these measurements and provides characterizations of this set.}
      \label{fig:overview}
   \end{figure}

In general, as implied above, many networks can be consistent with partial measurements. 
We are interested in identifying \textbf{all} such networks from given measurements (see Fig.~\ref{fig:overview}).
%This prompts a few pertinent questions:  \textbf{given a set of time series measurements, can we determine what other networks could have generated these measurements?} 
%If there are many possible networks how ``uncertain" are we, i.e.,
%\textbf{what is the most dissimilar network that could have generated these measurements?} 
%We explore these questions in detail to better understand the space of possible networks that could be inferred, we summarize this goal in 
We start by establishing necessary and sufficient conditions under which two systems will generate identical measurements when unperturbed by external inputs (Section \ref{sec:identical}). 
We also provide a formulation to solve for the most structurally dissimilar network (i.e., how bad can our inference be) and provide conditions under which this network can be analytically computed (Section \ref{sec:min_similar}). We extend our analysis to the case where measurements are close but not necessarily identical, establishing links to the spectral properties of an augmented observability Gramian (Section \ref{sec:close_obs}).

% The connectome model of the brain has lent itself favorably to analysis with tools that have been developed for network science and control \cite{BULLMORE_2012, GU_2015, BETZEL_2016, SRIVASTAVA_2022, CASTI_2023}. For instance, previous work has characterized properties that include how particular regions and structural properties contribute to the controllability of the brain \cite{GU_2015}. Moreover, recent work has discussed the challenges in identifying the true neural implementations of controllers \cite{LI_2025}. With this in mind, identifying the underlying network structure of the human brain itself is a task of paramount importance on the path to understanding how the brain functions. The ambitious goal to map the, approximately, $10^{11}$ neurons in the brain and the vast amount of connections between them underscores the necessity to discern the limitations in uncovering the true connectome from limited measurements of a select group of neurons.

\noindent \textbf{Notation.~} %We denote matrices $A$ and vectors $x$ with uppercase and lowercase letters respectively. 
$A_{ij}$ denotes the element in the $i$-th row and $j$-th column of $A$.
The $\ell_{p}$ norm of a vector is denoted by $\| \cdot \|_p$. The element-wise (Hadamard) and Kronecker product by $\odot$ and $\otimes$. Let $|\cdot |$ and $\sgn(\cdot)$ be the absolute value and sign function which are applied element-wise. Let $\mathcal{N}(\cdot)$ and $\mathcal{R}(\cdot)$ correspond to the nullspace and range of a matrix. Let $\text{diag}(\cdot)$ $\big(\text{blkdiag}(\cdot)\big)$ be a diagonal (block diagonal) matrix formed from the elements of its argument.

\section{Conditions for Identical Measurements}\label{sec:identical}
 
Consider the following networked linear system:
\begin{equation}\label{sys:orig}\Sigma : \begin{cases}
    \dx(t) = Ax(t), & x(0) = x_0\\
    y(t) = Cx(t) .
\end{cases}
\end{equation}
Where $A \in \mathbb{R}^{n\times n}$ encodes the weighted adjacency matrix of the network, $x \in \mathbb{R}^n$ encodes the states of the nodes, and $y \in \mathbb{R}^p$ is a linear measurement of the states through an  observation matrix $C \in \mathbb{R}^{p\times n}$. We study the conditions under which a perturbed system,
\begin{equation}\label{sys:perturbed}
    \tilde{\Sigma} : \begin{cases}
    \dot{\tilde{x}}(t) = (A+\Delta)\tilde{x}(t), & \tilde{x}(0) = x_0\\
    \tilde{y}(t) = C\tilde{x}(t) ,
\end{cases}
\end{equation}
yields identical measurements to the original system. Here, the perturbed system evolves under $A + \Delta$, where $\Delta$ is a perturbation to the network edges. 
%We now make this notion precise. 
\begin{definition}
    The measurements of the original system \eqref{sys:orig} and the perturbed system \eqref{sys:perturbed} are \textit{indistinguishable} if $\forall x_0$, $y(t) = \tilde{y}(t)~~\forall t \geq 0$.
\end{definition}
\begin{lemma}\label{lem:equiv}
    Measurements of systems \eqref{sys:orig} and \eqref{sys:perturbed} are indistinguishable, if and only if $CA^k = C(A+\Delta)^k~~\forall k \in \mathbb{Z}_{\geq 0}$. 
\end{lemma}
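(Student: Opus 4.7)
The plan is to express both trajectories in closed form via the matrix exponential and then use analyticity to convert the continuous-time equality of outputs into the claimed identity on powers of the system matrices.

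First I would solve the linear ODEs in \eqref{sys:orig} and \eqref{sys:perturbed} to obtain $y(t) = Ce^{At}x_0$ and $\tilde{y}(t) = Ce^{(A+\Delta)t}x_0$. Since indistinguishability requires equality for \emph{every} $x_0 \in \mathbb{R}^n$, the condition $y(t) = \tilde{y}(t)$ for all $t \geq 0$ and all $x_0$ is equivalent to the matrix identity $Ce^{At} = Ce^{(A+\Delta)t}$ for all $t \geq 0$. This reformulation strips away the dependence on initial conditions and puts the problem entirely in terms of the system matrices.

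For sufficiency, I would substitute the Taylor expansion $e^{At} = \sum_{k=0}^{\infty} \tfrac{t^k}{k!} A^k$ (and analogously for $A + \Delta$) into both sides. If $CA^k = C(A+\Delta)^k$ for every $k \in \mathbb{Z}_{\geq 0}$, the two series agree term by term in $t$, so $Ce^{At} = Ce^{(A+\Delta)t}$. For necessity, I would repeatedly differentiate the matrix identity with respect to $t$ and evaluate at $t = 0$: since $\frac{d^k}{dt^k} Ce^{At} = CA^k e^{At}$, which equals $CA^k$ at $t = 0$, and similarly for the perturbed side, one immediately extracts $CA^k = C(A+\Delta)^k$ for every $k$.

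The proof is essentially bookkeeping once the matrix exponential reformulation is in place; the only subtle point is justifying that equality of the two matrix-valued functions on $[0, \infty)$ forces equality of all Taylor coefficients at $t=0$, which follows from the fact that $t \mapsto Ce^{At}$ is entire analytic. I do not anticipate any genuine obstacle beyond this standard analyticity remark.
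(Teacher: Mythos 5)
Your proposal is correct and follows essentially the same route as the paper: reduce indistinguishability to the matrix identity $Ce^{At} = Ce^{(A+\Delta)t}$ (using arbitrariness of $x_0$), obtain necessity by matching derivatives at $t=0$, and sufficiency by substituting the power series of the matrix exponential. The analyticity remark you add is the same justification the paper uses implicitly when equating Taylor coefficients, so there is no substantive difference.
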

\begin{proof}
    Let $y(t)$ and $\tilde{y}(t)$ denote the measurements of \eqref{sys:orig} and \eqref{sys:perturbed} respectively. Assume that the measurements are indistinguishable, i.e., $y(t) = \tilde{y}(t)~~\forall t$. Then,
    \begin{equation}
        y(t) - \tilde{y}(t)= C(e^{At} - e^{(A+\Delta)t})x_0 = 0. 
    \end{equation}
    Since $x_0$ is an arbitrary initial condition this is equivalent to
    \begin{equation}
        C(e^{At} - e^{(A+\Delta)t})= \sum_{k=0}^\infty C(A^k - (A+\Delta)^k)\frac{t^k}{k!} = 0.
    \end{equation}
     Since this equation holds $\forall t$, for necessity $(\Rightarrow)$, at time $t = 0$ the derivatives and function values must be equal:
    \begin{equation}\label{eq:equiv_cond}
    \begin{aligned}
        \frac{d^j}{dt^j}\left(\sum_{k=0}^\infty C(A^k - (A+\Delta)^k)\frac{t^k}{k!}\right)\Bigr|_{t = 0} = 0 ~~\forall j \in \mathbb{Z}_{\geq 0} &\ \\ \Leftrightarrow 
        CA^k = C(A+\Delta)^k ~~\forall k \in \mathbb{Z}_{\geq 0}. &\
        \end{aligned}
    \end{equation}
    Thus we have shown $y(t) = \tilde{y}(t) ~~\forall t \Rightarrow CA^k = C(A+\Delta)^k ~~\forall k \in \mathbb{Z}_{\geq 0}$.
    Now to establish sufficiency $(\Leftarrow )$, assume that $CA^k = C(A+\Delta)^k ~~\forall k \in \mathbb{Z}_{\geq 0}$. Then we know that
    \begin{equation} \begin{split}
        y(t) = \sum_{k=0}^\infty CA^k \frac{x_0t^k}{k!} = \sum_{k=0}^\infty C(A+\Delta)^k \frac{x_0t^k}{k!} = \tilde{y}(t),
        \end{split}
    \end{equation} 
    and thus $CA^k = C(A+\Delta)^k ~~\forall k \in \mathbb{Z}_{\geq 0} \Rightarrow y(t) = \tilde{y}(t) ~~\forall t$, 
    which concludes the proof.
\end{proof}
The following corollary shows that we only require equivalence up to the $n$-th power. 
\begin{corollary}\label{cor:C-H}
    The measurements are indistinguishable if and only if $CA^k = C(A+\Delta)^k$ for $k = 0, \cdots ,n$. 
\end{corollary}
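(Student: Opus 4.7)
The plan is to obtain the corollary as a refinement of Lemma~\ref{lem:equiv}, which already yields the equivalence of indistinguishability with $CA^k = C(A+\Delta)^k$ for every $k \in \mathbb{Z}_{\geq 0}$. Necessity is immediate---restricting to $k = 0,\ldots,n$ is trivially weaker---so the substantive direction is sufficiency, which I would establish via the Cayley--Hamilton theorem, using it to collapse the infinite family of identities in Lemma~\ref{lem:equiv} down to the $n+1$ given here.

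The central idea is to use the characteristic polynomial of $A$ as a common annihilator. Let $p(x) = x^n + p_{n-1}x^{n-1} + \cdots + p_0$ denote this polynomial, so Cayley--Hamilton gives $Cp(A) = 0$, i.e., $CA^n$ equals an explicit linear combination of $C,CA,\ldots,CA^{n-1}$ with coefficients $-p_0,\ldots,-p_{n-1}$. Applying the hypothesis $CA^k = C(A+\Delta)^k$ for $k = 0,\ldots,n$ to every term transfers this identity verbatim to $A+\Delta$, yielding $Cp(A+\Delta) = 0$. So the same polynomial $p$---whose coefficients come from $A$'s characteristic polynomial, not from $A+\Delta$'s---annihilates $A+\Delta$ after left-multiplication by $C$.

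From here I would induct on $K \geq n$. Assuming $CA^j = C(A+\Delta)^j$ for every $j \in \{0,\ldots,K\}$, post-multiplying the two annihilation identities by $A^{K+1-n}$ and $(A+\Delta)^{K+1-n}$ respectively yields
\begin{align*}
CA^{K+1} &= -\sum_{k=0}^{n-1} p_k\, CA^{k+K+1-n},\\
C(A+\Delta)^{K+1} &= -\sum_{k=0}^{n-1} p_k\, C(A+\Delta)^{k+K+1-n}.
\end{align*}
Every exponent $k+K+1-n$ lies in $\{K-n+1,\ldots,K\} \subseteq \{0,\ldots,K\}$, so the inductive hypothesis equates the two sums term by term, and Lemma~\ref{lem:equiv} then delivers indistinguishability.

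The main obstacle I anticipate is resisting the temptation to invoke Cayley--Hamilton on $A$ and on $A+\Delta$ separately: their characteristic polynomials generally differ, and trying to reduce $(A+\Delta)^n$ using its own coefficients leaves residual terms that the hypothesis cannot eliminate. The correct observation is that we never need $p(A+\Delta) = 0$ as a matrix identity, only $Cp(A+\Delta) = 0$---and this is exactly what the $k = n$ case of the hypothesis (combined with $Cp(A)=0$) supplies, which is why $n+1$ identities are required rather than merely $n$.
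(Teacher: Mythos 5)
Your proposal is correct, and it reaches the result by a somewhat different route than the paper. The paper's proof first converts the finite hypothesis into conditions on the perturbation: from $CA^k = C(A+\Delta)^k$ for $k=0,\dots,n$ it peels off $C\Delta = 0$, $CA\Delta = 0, \dots, CA^{n-1}\Delta = 0$ (a sequential argument that reuses the earlier identities at each step), and then invokes Cayley--Hamilton to extend this to $CA^k\Delta = 0$ for all $k \geq 0$; the return from these $\Delta$-conditions to $CA^k = C(A+\Delta)^k$ for all $k$ (and hence, via Lemma~\ref{lem:equiv}, indistinguishability) is left implicit, though it needs only one more easy induction. You instead never introduce conditions on $\Delta$: you use the characteristic polynomial $p$ of $A$ as a common annihilator, observing that the hypothesis transfers $Cp(A)=0$ to $Cp(A+\Delta)=0$, and then run a clean strong induction on the power identities themselves, post-multiplying by $A^{K+1-n}$ and $(A+\Delta)^{K+1-n}$. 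Both arguments hinge on Cayley--Hamilton applied to $A$ (and you correctly note that applying it to $A+\Delta$ separately would not work), but your version is self-contained and closes the loop back to Lemma~\ref{lem:equiv} explicitly, while the paper's version has the advantage of producing exactly the conditions $CA^k\Delta = 0$, $k = 0,\dots,n-1$, that are immediately repackaged as $\Ob\Delta = 0$ in Theorem~\ref{thm:identical}.
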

\begin{proof}
    This corollary is a direct consequence of the Cayley-Hamilton Theorem. From Lemma \ref{lem:equiv}, $y(t) = \tilde{y}(t)~~\forall t \Leftrightarrow CA^k = C(A+\Delta)^k ~~\forall k \in \mathbb{Z}_{\geq 0}$. Notice that if $CA^k = C(A+\Delta)^k \text{ for } k = 0, \cdots, n$ we have
    \begin{multline*}
            \begin{aligned}
        CA &= C(A + \Delta)  \Rightarrow C\Delta = 0 \\
        CA^2 &= CA(A+\Delta) \Rightarrow CA\Delta = 0 \\
        &~~\vdots \\
        CA^n &= CA^{n-1}(A+\Delta) \Rightarrow CA^{n-1}\Delta = 0 .
    \end{aligned}
    \end{multline*}
    Hence, $CA^k\Delta = 0$ for $k = 0, \cdots, n-1$. For all $k \geq n$, we can apply the Cayley-Hamilton theorem to express $CA^k\Delta$ as a linear combination of its lower powers, i.e., $CA^k\Delta = \sum_{m=0}^{n-1}\alpha_mCA^m\Delta$ for some $\alpha_m \in \mathbb{R}$. By the above analysis, all terms in this sum are equal to zero. Thus, we may conclude that $CA^k\Delta = 0 \quad \forall k\geq 0$.  
\end{proof}

The result of Lemma \ref{lem:equiv} and Corollary \ref{cor:C-H} allow us to relate indistinguishability of \eqref{sys:orig} and \eqref{sys:perturbed} to observability of \eqref{sys:orig}. 

\begin{theorem}\label{thm:identical}
    Let $\Ob$ be the observability matrix of the original system, $\Sigma$ given in \eqref{sys:orig}.
    Measurements will be indistinguishable if and only if $\Ob \Delta = 0$.  
\end{theorem}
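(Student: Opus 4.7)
The plan is to unpack the statement $\Ob\Delta = 0$ and link it directly to the condition already isolated in Corollary~\ref{cor:C-H}. Writing $\Ob = \begin{bmatrix} C^\top & (CA)^\top & \cdots & (CA^{n-1})^\top \end{bmatrix}^\top$, the condition $\Ob\Delta = 0$ is just the compact form of the finite family $CA^k\Delta = 0$ for $k = 0, \ldots, n-1$. Corollary~\ref{cor:C-H} says indistinguishability is equivalent to $CA^k = C(A+\Delta)^k$ for $k=0,\ldots,n$, so the theorem reduces to showing that these two finite conditions are equivalent.

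For the forward direction, I would simply observe that the proof of Corollary~\ref{cor:C-H} already performed this reduction: expanding $C(A+\Delta)^{k} = CA^{k-1}(A+\Delta)$ and subtracting $CA^k$ yields exactly $CA^{k-1}\Delta = 0$ for $k=1,\ldots,n$, which is precisely $\Ob\Delta=0$. So essentially no new work is needed on this side; I would just cite the chain of implications already written down.

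For the reverse direction, starting from $\Ob\Delta = 0$, I would first invoke Cayley--Hamilton to extend $CA^k\Delta = 0$ from $k=0,\ldots,n-1$ to all $k\geq 0$ (every $A^k$ with $k\geq n$ is a polynomial in $A^0, \ldots, A^{n-1}$). With this extension in hand, I would prove $CA^k = C(A+\Delta)^k$ by induction on $k$: the base case $k=0$ is trivial, and the inductive step uses
\begin{equation*}
C(A+\Delta)^k = C(A+\Delta)^{k-1}(A+\Delta) = CA^{k-1}(A+\Delta) = CA^k + CA^{k-1}\Delta = CA^k,
\end{equation*}
where the second equality is the inductive hypothesis and the last uses $CA^{k-1}\Delta = 0$. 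Combined with Lemma~\ref{lem:equiv}, this gives indistinguishability.

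The main obstacle, such as it is, is bookkeeping rather than insight: the forward implication is essentially lifted verbatim from the proof of Corollary~\ref{cor:C-H}, and the only subtlety is making sure the Cayley--Hamilton step in the reverse direction is invoked on the correct object (on $A$ acting from the right of $C$, so that $CA^n$ can be rewritten as a linear combination of $CA^0, \ldots, CA^{n-1}$, allowing $CA^n\Delta = 0$ and all higher-order analogues). Once that is handled, the induction is mechanical and the proof is short.
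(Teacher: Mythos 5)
Your proposal is correct and follows essentially the same route as the paper: reduce via Lemma~\ref{lem:equiv} and Corollary~\ref{cor:C-H}, obtain $\Ob\Delta = 0$ from the telescoping identity $C(A+\Delta)^k = CA^{k-1}(A+\Delta)$, and recover $CA^k = C(A+\Delta)^k$ from $\Ob\Delta = 0$ by Cayley--Hamilton plus induction. Your write-up is in fact somewhat more explicit than the paper's (which compresses the reverse direction into ``repeated evaluation''), but there is no substantive difference in the argument.
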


\begin{proof}
     From Corollary \ref{cor:C-H} we know that, 
     $ y(t) = \tilde{y}(t)~~\forall t \Leftrightarrow
        CA^k = C(A+\Delta)^k ~~ \text{ for } k = 0, \cdots, n$.
      For $ k = 1 $ we see that this condition is satisfied if $C\Delta = 0$; for $k = 2$ we require $CA\Delta = 0$. Repeated evaluation establishes
    \begin{equation}
        \begin{bmatrix}
            C^\top &
            (CA)^\top &
            \dots &
            (CA^{n-1})^\top
        \end{bmatrix}^\top \Delta = \Ob \Delta = 0, 
    \end{equation}
    which concludes the proof. 
\end{proof}

\begin{remark}\label{rem:col-wise}
     Theorem \ref{thm:identical} is a column-wise constraint on the perturbations $\Delta$. This implies that modifications of the outgoing edges of a node are independent from other nodes. 
\end{remark}

Theorem \ref{thm:identical} implies the set of networks that are indistinguishable is $\mathcal{A} = \{A+\Delta : \Ob \Delta = 0  \}$. Consider the following example network and measurement scheme:
\begin{equation} \label{eq:example}
    A = \begin{bmatrix}
        1 & 1 & 1 & 0 \\
        0 & 1 & 1 & 0 \\
        1 & 0 & 0 & 0 \\
        0 & 1 & 1 & 1
    \end{bmatrix}, ~~ C = \begin{bmatrix}
        1 & 0 & 0 & 0 
    \end{bmatrix}.
\end{equation}
The number of \textit{structurally dissimilar} networks, i.e., networks that do not share the same underlying unweighted adjacency matrix, is $2^{n^2}$. For a network with $n=4$ nodes, as in \eqref{eq:example}, there are $65536$ possible structurally dissimilar networks. For the choice of $A$ and $C$ 
in \eqref{eq:example}
\begin{equation}
    \mathcal{N}(\Ob)=  \text{span}\bigl\{\begin{bmatrix}
        0 & -1 & 1 & 0 
    \end{bmatrix}^\top, \begin{bmatrix}
        0& 0 & 0 & 1
    \end{bmatrix}^\top \bigr\}.
\end{equation}
Consider the first column of $A$, which can be modified by adding a linear combination of vectors in $\mathcal{N}(\Ob)$ into:
\begin{equation}\label{eq:possible_col}
    \begin{bmatrix}
        1 \\ 1 \\ 0 \\ 0
    \end{bmatrix}, \begin{bmatrix}
        1 \\ 1 \\ 0 \\ 1
    \end{bmatrix},
    \begin{bmatrix}
        1 \\ 2 \\ -1 \\ 0
    \end{bmatrix},
    \begin{bmatrix}
        1 \\ 2 \\ -1 \\ 1
    \end{bmatrix},
    \begin{bmatrix}
        1 \\ 0 \\ 1 \\ 1
    \end{bmatrix}.
\end{equation}
Other values are possible; we have shown the five that represent all possible structures (i.e., sparsity patterns). Notice that edge $A_{11} = 1$ is present in all structures in $\mathcal{A}$. In fact, all incoming edges into node 1 ($A_{1j} ~ \forall j$) cannot be altered, motivating a formalization of this phenomenon:
\begin{definition}
    Let the columns of $\Phi$ form a basis for $\mathcal{N}(\Ob)$ and $e_i$ be a standard basis vector with $1$ in the $i$-th component. 
    An edge encoded in $A_{ij}$ is \textit{structurally essential} if 
    \begin{equation}
        \Phi^\top e_i = 0.
    \end{equation}
    
\end{definition}
\noindent In our example, all edges in $A_{1j}$ are structurally essential and cannot be altered without changing measured dynamics. %In fact, the entire row which corresponds to the evolution of node 1 is structurally essential. 

\begin{definition}
    An edge encoded in $A_{ij}$ is \textit{structurally decoupled} from all other edges if $\exists v$ such that
    \begin{equation}
        \Phi v = e_i.
    \end{equation}
    We refer to edges that are not structurally decoupled as being \textit{structurally coupled} to some other edges. 
\end{definition}

\noindent Notice that edge $A_{41}$ is structurally decoupled from all other edges; thus, we can freely remove or add this edge. 
Here, enumerating all possible network structures amounts to determining whether the basis of $\mathcal{N}(\Ob)$ allows the structurally coupled edges in the network to be removed independently of one another. 
Some structurally coupled edges may only be removed or added simultaneously.
For example, edges $A_{24}$ and $A_{34}$ must both be present or both missing; similarly, \eqref{eq:possible_col} shows that there is no way to simultaneously remove both $A_{21}$ and $A_{31}$. In this particular example, there are $6^3 \times 4 = 864$ structurally dissimilar networks in $\mathcal{A}$, emphasizing the unwieldy nature of enumerating this set. The combinatorial iterations through the various structures motivates developing aggregate characterizations of $\mathcal{A}$. 

\section{Dissimilar Networks}\label{sec:min_similar}

Given some true network $A$, we would like to understand the structure of the most dissimilar network that is consistent with the measurements made. Towards understanding this ``worst-case" network we formulate a minimization program over the set $\mathcal{A}$. Defining  $Z \in \mathbb{R}^{n\times n}$ such that $Z_{ij} = 1$ if $A_{ij} \neq 0$ and $Z_{ij} = 0$ otherwise, consider:
\begin{equation} \label{eq:min_l1}
    \begin{aligned}
        \min_{\Delta} &\ \|\text{vec}( Z \odot (A+\Delta))\|_{1}\\
        \textrm{s.t.} &\ \Ob \Delta = 0.
    \end{aligned}
\end{equation}
The element-wise product of $Z$ and $A+\Delta$ alongside the use of an $\ell_1$ norm encourages the solution of \eqref{eq:min_l1} to remove edges that exist in $A$.

Now, let  $\text{rank}(\Ob) = r$ hence, $\dim(\mathcal{N}(\Ob)) = n-r$ and thus, we can define $\Phi \in \mathbb{R}^{n\times (n-r)}$ such that the columns of $\Phi$ form an orthonormal basis for $\mathcal{N}(\Ob)$ such that $\Phi^\top \Phi = I_r$. Parameterizing $\Delta = \Phi V$ where $V \in \mathbb{R}^{(n-r)\times n}$, we can write the program in \eqref{eq:min_l1} as an unconstrained program
\begin{equation} \label{eq:min_l1_unconstrained}
    \begin{aligned}
        \min_{V} &\ \| \text{vec}(A+Z\odot(\Phi V))\|_{1}.\\
    \end{aligned}
\end{equation}
Identifying that \begin{equation}
    \text{vec}(A+Z\odot(\Phi V)) = a + \text{diag}(z)(I_n\otimes \Phi)v
\end{equation}
where $a = \text{vec}(A) \in \mathbb{R}^{n^2}$, $v = \text{vec}(V)\in \mathbb{R}^{n(n-r)}$ and $z = \text{vec}(Z)\in \mathbb{R}^{n^2}$, we can rewrite \eqref{eq:min_l1} using the epigraph form of the problem to arrive at a linear program (LP) \cite{BOYD__2004}:

\begin{equation} \label{eq:LP}
    \begin{aligned}
        \min_{t, v} &\ \mathds{1}^\top t\\
        \textrm{s.t.}&\ -t \leq a + \Gamma v \leq t.
    \end{aligned}
\end{equation}
Here $\Gamma = \text{diag}(z)(I_n\otimes \Phi) \in \mathbb{R}^{n^2\times n(n-r)}$.

% \begin{remark}
%      The constraints in \eqref{eq:LP} can be written as, $t \geq |a + \Gamma v|$, and the optimal solution $(t^*, v^*)$ satisfies $t^* = |a + \Gamma v^*|$. If it did not, we know  $\exists s \geq 0 ~\text{s.t.} ~ t^* - s = |a + \Gamma v^*|$, but, $\mathds{1}^\top(t^* - s) < \mathds{1}^\top t^*$ and hence $t^*$ would not be optimal.
% \end{remark}

Problem \eqref{eq:LP} can be readily solved with LP solvers. We now provide conditions under which this problem admits a closed form analytic solution; this provides better intuition about the problem and is also advantageous for the analysis of very large networks (which translate to very large LPs). 

\subsection{Equivalence of the $\ell_{1}$ and $\ell_{2}$ Programs}

Consider the program given in \eqref{eq:min_l1_unconstrained}; we now relax it to instead minimize the $\ell_2$ norm:
\begin{equation} \label{eq:min_l2}
    \begin{aligned}
        \min_{V} &\ \frac{1}{2}\| \text{vec}(A+Z\odot(\Phi V))\|_2^2.
    \end{aligned}
\end{equation}
This problem is convex; to find the optimal solution, take the gradient of the objective  and set it to zero:

\begin{equation}\label{eq:opt_l2}
    \Phi^\top(A + Z\odot(\Phi V^*)) = 0.
\end{equation}

Note that we can vectorize this to yield a linear equation, i.e., $(I_n\otimes\Phi^\top)(a + \Gamma v^*) = 0$. Given the solution $V^*$ to matrix equation \eqref{eq:opt_l2}, we now provide a condition under which it is also a solution to \eqref{eq:min_l1}.  

\begin{theorem}\label{thm:equiv_1_2}
Let $V^*$ be a minimizer for \eqref{eq:min_l2}. It is also a minimizer to the original program \eqref{eq:min_l1} if \begin{equation} \label{eq:Thm2_minimizer_condition}
    \Phi^\top(Z\odot \sgn(A + Z\odot (\Phi V^*))) = 0.
\end{equation} 
\end{theorem}

\begin{proof}
    Let $v^* = \text{vec}(V^*)$ where $V^*$ is a minimizer of \eqref{eq:min_l2}. Assume that $V^*$ satsifies \eqref{eq:Thm2_minimizer_condition}.
    Recall that the programs \eqref{eq:min_l1} and \eqref{eq:LP} are equivalent. We now show that for an appropriate choice $(v^*, t^*, \lambda^*)$, where $\lambda^* =  \begin{bmatrix}
        {\lambda_1^*}^\top & {\lambda_2^*}^\top
    \end{bmatrix}^\top$ are the dual variables associated with the constraints in \eqref{eq:LP}, the Karush-Kuhn-Tucker (KKT) conditions for \eqref{eq:LP} are satisfied. 
    We select $t^* = |a + \Gamma v^*|$, and $\lambda_{1,i}^* = 1$ and $\lambda_{2,i}^* = 0$ if $[\sgn(a + \Gamma v^*)]_i = 1$, $\lambda_{1,i}^* = 0$ and $\lambda_{2,i}^* = 1$ if $[\sgn(a + \Gamma v^*)]_i = -1$, and lastly $\lambda_{1,i}^* = 1/2 = \lambda_{2,i}^* $ if $[\sgn(a + \Gamma v^*)]_i = 0$. With these choices $\lambda_1^* - \lambda_2^* = \sgn(a + \Gamma v^*)$. 
    The Lagrangian of \eqref{eq:LP} is
    \begin{multline}
        \mathcal{L}(v,t,\lambda) = \mathds{1}^\top t +  \lambda^\top \left(\begin{bmatrix}
            \Gamma & -I_{n^2} \\
            -\Gamma & -I_{n^2}
        \end{bmatrix}\begin{bmatrix}
            v \\ t
        \end{bmatrix} + \begin{bmatrix}
            a \\ -a
        \end{bmatrix}\right).
    \end{multline}
The corresponding KKT conditions are

      \begin{subequations}
        \begin{align}
            \begin{bmatrix}
                0 \\ \mathds{1}
            \end{bmatrix} + \begin{bmatrix}
            \Gamma ^\top& -\Gamma^\top \\
            -I_{n^2} & -I_{n^2}
        \end{bmatrix}\lambda^* = 0 \label{eq:stationarity}\\
        \begin{bmatrix}
            \Gamma & -I_{n^2} \\
            -\Gamma & -I_{n^2}
        \end{bmatrix}\begin{bmatrix}
            v^* \\ t^*
        \end{bmatrix} + \begin{bmatrix}
            a \\ -a
        \end{bmatrix} \leq 0 \label{eq:feasibility}\\
        \lambda^* \geq 0 \label{eq:geq}\\
        \text{diag}(\lambda^*)\left(\begin{bmatrix}
            \Gamma & -I_{n^2} \\
            -\Gamma & -I_{n^2}
        \end{bmatrix}\begin{bmatrix}
            v^* \\ t^*
        \end{bmatrix} + \begin{bmatrix}
            a \\ -a
        \end{bmatrix}\right) = 0. \label{eq:comp_slack}
        \end{align}
    \end{subequations}
    With $t^*$ as chosen, the feasibility condition \eqref{eq:feasibility} is satisfied. Moreover, \eqref{eq:geq} is satisfied by our choice of $\lambda^*$. It remains to check \eqref{eq:stationarity} and \eqref{eq:comp_slack}. Notice, from the stationarity of the Lagrangian \eqref{eq:stationarity} we have that $\lambda_1^* + \lambda_2^* = \mathds{1}$ which is also satisfied from our choice of $\lambda^*$. 
    Rewriting \eqref{eq:comp_slack} as\begin{equation}\label{eq:broken_slack}\begin{split}
        \text{diag}(\lambda_1^*)(a + \Gamma v^* - |a + \Gamma v^*|) = 0  \\
        \text{diag}(\lambda_2^*)(-(a + \Gamma v^*) - |a + \Gamma v^*|)= 0, \end{split}
    \end{equation}
    adding and subtracting these equations to each other we have 
    \begin{equation}\begin{split}
    \text{diag}(\lambda_1^* - \lambda_2^*)(a + \Gamma v^*) = |a + \Gamma v^*| \\
    \text{diag}(\lambda_1^* - \lambda_2^*)|a + \Gamma v^*| = a + \Gamma v^*. \end{split}
\end{equation}
Since $\lambda_1^* - \lambda_2^* = \sgn(a + \Gamma v^*)$ these equations are satisfied, and thus \eqref{eq:comp_slack} is satisfied.
Lastly, we must consider the first $n(n-r)$ rows of the stationarity of the Lagrangian \eqref{eq:stationarity}, 
\begin{equation}
    \Gamma ^\top(\lambda_1^*-\lambda_2^*) = \Gamma ^\top\sgn(a + \Gamma v^*)  = 0.
\end{equation}
Which in matrix form reads as
\begin{equation}
    \Phi^\top(Z\odot \sgn(A + Z\odot (\Phi V^*)))= 0,
\end{equation}
which is satisfied by assumption and concludes the proof. 

\end{proof}
% \begin{remark}
%     When $Z = \mathds{1}\mathds{1}^\top$ the solution to the program \eqref{eq:min_l2} yields $A+\Delta = (I_n-\Phi\Phi^\top) A$, which corresponds to the removal of the projection of $A$ onto $\mathcal{N}(\Ob)$. Moreover, the resulting condition for a solution of the program \eqref{eq:min_l2} to be a solution of the program \eqref{eq:min_l1} is, $\Phi^\top\sgn((I_n -\Phi\Phi^\top)A) = 0$, which implies $\mathcal{R}(\Ob^\top)$ must be invariant to the sign operator. 
% \end{remark}

\subsection{Illustrative Example}

We now revisit example \eqref{eq:example}. In the left half of Fig.~\ref{fig:identical_examples}, we display the original network (i.e., \eqref{eq:example}); in the right half of the figure, we display the minimizer to \eqref{eq:min_l1}, which is the most structurally dissimilar network that still exhibits identical measurements as the original network.
In this example, the condition in Theorem \ref{thm:equiv_1_2} is satisfied, so it suffices to solve \eqref{eq:opt_l2} to find the most dissimilar network.

Not counting self-dynamics, the two networks have only two edges in common. Since all edges ($A_{4j}$) into node 4 are structurally decoupled, they are entirely removed in the dissimilar network.
Notice that since nodes 2 and 3 influence the dynamics of node 1, the dissimilar network  essentially permutes the role of nodes 2 and 3 compared to the original network.
Conversely, structurally essential edges $A_{1j}$ remain present in both networks, as expected. 
 
This example illustrate a few key points.
First, edges connected to nodes that do not influence measured nodes can be removed from the network. Second, the relationship between unmeasured but influential nodes can be changed by a simple permutation of their labels. 

% \begin{figure}[thpb]      
%       \centering
%       \begin{minipage}{0.55\linewidth}
%       \centering
%           \includegraphics[clip, trim=1cm 2.0cm 19.7cm 2.0cm, scale = 0.3]{Figures/identical_ex2.pdf}
%       \end{minipage}\hfill
%       \begin{minipage}{0.45\linewidth}
%             \caption{Network described by \eqref{eq:example} (Top) and the  equivalent maximally dissimilar network (solution to \eqref{eq:min_l1}) (Bottom). Blue dashed circles indicate which nodes are measured.}\label{fig:identical_examples}
%       \end{minipage}
%    \end{figure}

\begin{figure}[thpb]      
      \centering
      \begin{minipage}{0.6\linewidth}
      \centering
          \includegraphics[clip, trim=1.27cm 4.5cm 14.3cm 4.5cm, scale = 0.45]{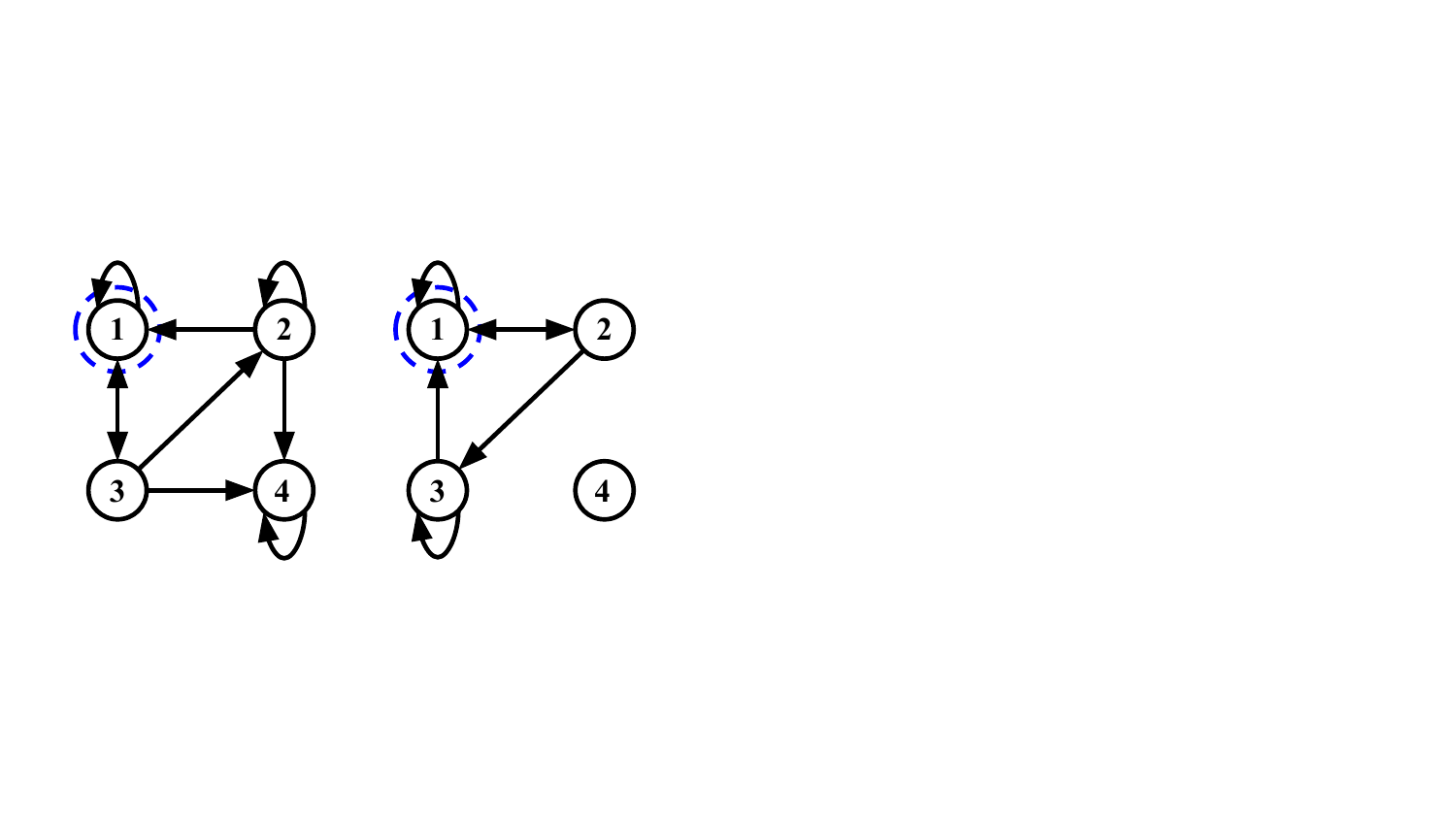}
      \end{minipage}\hfill
      \begin{minipage}{0.4\linewidth}
            \caption{(Left) Network described by \eqref{eq:example}. (Right) The most structurally dissimilar network exhibiting identical measurements as the original network. Blue dashed circles indicate measured nodes.}\label{fig:identical_examples}
      \end{minipage}
   \end{figure}

% \begin{figure}[thpb]      
%       \centering
%           \includegraphics[clip, trim=1.27cm 4.5cm 14.3cm 4.5cm, width = 0.85\linewidth]{Figures/identical_ex2.pdf}
%             \caption{Network described by \eqref{eq:example} (Left) and the  equivalent maximally dissimilar network (solution to \eqref{eq:min_l1}) (Right). Blue dashed circles indicate which nodes are measured.}\label{fig:identical_examples}
%    \end{figure}

\section{Networks with $\epsilon$-Close Measurements}\label{sec:close_obs}
In practice, measurements are corrupted by noise, and executing the same experiment twice will yield similar yet non-identical measurements. 
We now extend previous analysis to account for this: we characterize the set of networks who yield similar (not necessarily identical) measurements to the original system.
We assume both $A$ and $A+\Delta$ are Hurwitz\footnote{In practice this assumption is necessary as unstable systems would yield exponentially growing trajectories; it is less meaningful to compare the ``closeness" of such trajectories.}
so that the corresponding observability Gramians are well-defined. 
We consider an augmented system akin to \cite{PAPACHRISTODOULOU_2010, ANDERSON_2011}: 
\begin{equation}
    \bar{A} = \text{blkdiag}(A, A+\Delta), ~~ \bar{C} = \begin{bmatrix}
        C & -C 
    \end{bmatrix},
\end{equation} and the system $\bar\Sigma$ as
\begin{equation}\label{sys:aug}\bar\Sigma : \begin{cases}
    \dot{\bar x}(t) = \bar{A}\bar x(t), & \bar x(0) = \begin{bmatrix}
        x_0^\top & x_0^\top
    \end{bmatrix}^\top\\
    e(t) = \bar{C}\bar x(t). 
\end{cases}
\end{equation}
Here, $\|e(t)\|_2$ is a measure of the cumulative error between measurements of the original and perturbed systems.

\begin{theorem}\label{thm:close_obsv}
    Let $\epsilon > 0$ and $\bar{W}_o$ be the observability Gramian for the augmented system $\bar{\Sigma}$ given in \eqref{sys:aug}. 
    
    If $\bar{W}_o \prec \frac{\epsilon^2}{2\| x_0\|^2_2}I$,  then $\| e(t)\|_2 < \epsilon$.
\end{theorem}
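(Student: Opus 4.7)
The plan is to interpret $\|e(t)\|_2$ as the $L^2$-norm of the error signal (consistent with the paper's description of ``cumulative error'', and with the fact that a pointwise reading would be immediately false at $t=0$ where $e(0)=Cx_0-Cx_0=0$), and then to express this quantity as a quadratic form in $\bar x(0)$ via the standard observability Gramian identity. The hypothesized Loewner bound on $\bar W_o$ then gives the result in one line.

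First I would note that by assumption both $A$ and $A+\Delta$ are Hurwitz, so $\bar A = \text{blkdiag}(A, A+\Delta)$ is also Hurwitz and the Gramian
\begin{equation}
\bar W_o = \int_0^\infty e^{\bar A^\top t}\bar C^\top \bar C e^{\bar A t}\,dt
\end{equation}
is well-defined. Since $e(t) = \bar C e^{\bar A t}\bar x(0)$, the usual output-energy identity yields
\begin{equation}
\int_0^\infty \|e(t)\|_2^2\,dt \;=\; \bar x(0)^\top \bar W_o\, \bar x(0).
\end{equation}

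Next, from $\bar x(0) = [x_0^\top,\ x_0^\top]^\top$ I would compute directly that $\|\bar x(0)\|_2^2 = 2\|x_0\|_2^2$. Applying the hypothesis $\bar W_o \prec \tfrac{\epsilon^2}{2\|x_0\|_2^2}I$ to this specific vector then gives
\begin{equation}
\bar x(0)^\top \bar W_o \bar x(0) \;<\; \frac{\epsilon^2}{2\|x_0\|_2^2}\,\|\bar x(0)\|_2^2 \;=\; \frac{\epsilon^2}{2\|x_0\|_2^2}\cdot 2\|x_0\|_2^2 \;=\; \epsilon^2,
\end{equation}
and taking square roots concludes that $\|e\|_{L^2} < \epsilon$.

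There is no serious obstacle here: the argument is essentially the classical Lyapunov/Gramian identity for Hurwitz systems composed with a single strict Loewner inequality. The only point requiring care is the notational reading of $\|e(t)\|_2$ as a signal norm rather than a pointwise one; the $x_0 = 0$ edge case is also trivially handled since both sides of the desired inequality vanish.
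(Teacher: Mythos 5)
Your proposal is correct and follows essentially the same route as the paper: both express $\|e(t)\|_2^2$ as the Gramian quadratic form $\bar{x}(0)^\top \bar{W}_o \bar{x}(0)$ and apply the strict Loewner bound to $\bar{x}(0) = [x_0^\top\ x_0^\top]^\top$, using $\|\bar{x}(0)\|_2^2 = 2\|x_0\|_2^2$ to obtain $\epsilon^2$. Your explicit remarks on the $L^2$ reading of the norm and the Hurwitz assumption are sensible clarifications but do not change the argument.
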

\begin{proof}
    Assume $\bar{W}_o \prec \frac{\epsilon^2}{2\| x_0\|^2_2}I$. Then,
    \begin{equation} \begin{split}
    \| e(t) \|_2^2 & = \int_0^\infty e(t)^\top e(t)dt  \\ & = \int_0^\infty\bar{x}(0)^\top  e^{\bar A ^\top t} \bar{C}^\top \bar{C} e^{\bar A  t}\bar{x}(0)dt = \bar{x}(0)^\top \bar{W}_o\bar{x}(0).
    \end{split}
\end{equation}
By assumption, $\bar{x}(0)^\top \bar{W}_o\bar{x}(0) < \bar{x}(0)^\top \frac{\epsilon^2}{2\| x_0\|^2_2}I\bar{x}(0) = \epsilon^2$. Hence,
    $\| e(t) \|_2 < \epsilon.$
\end{proof}

Theorem \ref{thm:close_obsv} elucidates the relationship between $\bar{W}_o$, the observability Gramian of \eqref{sys:aug}, and how close the measurements of the two systems will be. 
Now, we leverage it to consider the most structurally dissimilar network that produces $\epsilon$-close measurements as the original network.
This is similar to Section \ref{sec:min_similar}; however, we no longer require identical measurements. We first establish the following definition. 

\begin{definition}
    We say that $\bar{W}_o$ is a \textit{feasible} observability Gramian if $\exists \Delta$ such that
    \begin{equation}\label{eq:lyap}
    \bar{W}_o\bar{A} + \bar{A}^\top \bar{W}_o = - \bar{C}^\top \bar{C}.
    \end{equation}
\end{definition}

The following program generates the maximally dissimilar network (analogous to \eqref{eq:min_l1}) for $\epsilon$-close measurements:
\begin{equation}\label{eq:min_similar}
    \begin{aligned}
        \min_{\Delta, \bar{W}_o} &\ \|\text{vec}( Z \odot (A+\Delta))\|_{1}\\
        \textrm{s.t.} &\ \eqref{eq:lyap}, \quad
        \ \bar{W}_o \prec \frac{\epsilon^2}{2 \| x_0\|_2^2}I  .
    \end{aligned}
\end{equation}

\begin{remark}
    This program is nonconvex due to its bilinear constraint, but can be solved with branch and bound algorithms in YALMIP \cite{LOFBERG_2004}.
\end{remark}
 
%With this in mind, we can now consider how the set of similar networks is influenced by the spectral properties of $\bar{W}_o$. 

We now consider the implications of \eqref{eq:lyap}.
%, which is also a constraint in the optimization program \eqref{eq:min_similar}.
This equation characterizes all possible Gramians $\bar{W}_o$ that correspond to networks with $\epsilon$-close measurements to the original.
However, given a feasible $\bar{W}_o$ there can exist multiple values of $\bar{A}$, which translate to multiple candidate networks via the perturbation $\Delta$.
We extend techniques from \cite{FERNANDO_1981} to determine what perturbations $\Delta$ are consistent with a given $\bar{W}_o$.

Since $\bar{W}_o$ is a real symmetric matrix, we can diagonalize it using an orthonormal matrix $V$: $\bar{W}_o = VDV^\top$. We can partition $V$ as $V = \begin{bmatrix}
    V_o & V_{\bar{o}}
\end{bmatrix}$, where $l := \text{dim}(\mathcal{N}(\bar{W}_o))$, $V_o \in \mathbb{R}^{2n \times (2n-l)}$, and $V_{\bar{o}} \in \mathbb{R}^{2n \times l}$.
The columns of $V_o$ and $V_{\bar{o}}$ form a basis for the observable and the unobservable subspace of $\bar\Sigma$, respectively. 
Diagonal matrix $D$ can be partitioned as $D = \text{blkdiag}(\Lambda, 0)$ where $\Lambda \in \mathbb{R}^{(2n-l) \times (2n-l)}$ is full rank. We now pre- and post-multiply \eqref{eq:lyap} by $V^\top$ and $V$:
\begin{equation}\label{eq:expansion}
    DV^\top\bar{A}V + V^\top\bar{A}^\top VD= - (\bar{C}V)^\top (\bar{C}V).
\end{equation}
Here, \begin{equation}
    V^\top\bar{A} V = \begin{bmatrix}
        \check{A}_o & 0 \\ \check{A}_{21} & \check{A}_{\bar{o}} 
    \end{bmatrix},~~  \bar{C}V = \begin{bmatrix}
        \check{C}_o & 0
    \end{bmatrix}
\end{equation} correspond to an observability decomposition of $\bar{\Sigma}$; so \eqref{eq:expansion} reduces to

% \begin{equation}
%     \begin{bmatrix}
%         \Lambda & 0 \\ 0 & 0 
%     \end{bmatrix} \begin{bmatrix}
%         \check{A}_o & 0 \\ \check{A}_{21} & \check{A}_{\bar{o}} 
%     \end{bmatrix}  + \begin{bmatrix}
%         \check{A}_o & 0 \\ \check{A}_{21} & \check{A}_{\bar{o}} 
%     \end{bmatrix}^\top \begin{bmatrix}
%         \Lambda & 0 \\ 0 & 0 
%     \end{bmatrix} = -\begin{bmatrix}
%         \check{C}_o^\top \check{C}_o & 0 \\ 0 & 0 
%     \end{bmatrix}.
% \end{equation}

\begin{equation}
    \Lambda\check{A}_o + \check{A}_o^\top \Lambda= -\check{C}_o^\top \check{C}_o.
\end{equation}
Following \cite{FERNANDO_1981}, pre- and post-multiply by $\Lambda^{-1/2}$ to obtain
\begin{equation}
    \Lambda^{1/2}\check{A}_o\Lambda^{-1/2} + \Lambda^{-1/2}\check{A}_o^\top \Lambda^{1/2}= -\Lambda^{-1/2}\check{C}_o^\top \check{C}_o\Lambda^{-1/2}.
\end{equation}
This implies that the symmetric part of $\Lambda^{1/2}\check{A}_o\Lambda^{-1/2}$ is $-\frac{1}{2}\Lambda^{-1/2}\check{C}_o^\top \check{C}_o\Lambda^{-1/2}$. In general, we have that
\begin{equation}
    \Lambda^{1/2}\check{A}_o\Lambda^{-1/2} = -\frac{1}{2}\Lambda^{-1/2}\check{C}_o^\top \check{C}_o\Lambda^{-1/2} + S,
\end{equation}
for an arbitrary skew-symmetric matrix $S$. Pre- and post-multiply again to yield
\begin{equation}
    \check{A}_o = -\frac{1}{2}\Lambda^{-1}\check{C}_o^\top \check{C}_o + \Lambda^{-1/2}S\Lambda^{1/2}.
\end{equation}
Returning to the full augmented system, we have
\begin{equation}
\bar{A} = 
    V \left(
  \begin{bmatrix} -\frac{1}{2}\Lambda^{-1}\check{C}_o^\top \check{C}_o+ 
    \Lambda^{-1/2}S\Lambda^{1/2} & 0 \\
        \check{A}_{21} &  \check{A}_{\bar{o}} 
    \end{bmatrix} \right)V^\top,
\end{equation}
which can be equivalently expressed as 
\begin{multline}\label{eq:set_similar}
\bar{A} = \text{blkdiag}(A, A+\Delta) = 
\underbrace{
       -\frac{1}{2}V_{o}\Lambda^{-1}\check{C}_o^\top \check{C}_oV_{o}^\top}_{\text{fixed among all networks}} + \\V_{o} \Lambda^{-1/2}S\Lambda^{1/2}V_o^\top + V_{\bar{o}} \check{A}_{12}V_{o}^\top + V_{\bar{o}}\check{A}_{\bar{o}}V_{\bar{o}}^\top.
\end{multline}
From this analysis, we see that solutions $(S, \check{A}_{21}, \check{A}_{\bar{o}})$ to \eqref{eq:set_similar} will characterize perturbations $\Delta$ and resulting networks $A+\Delta$ that generate similar measurements. 

%This can be used in practice: if two inference methods provide different inferred networks $A$ and $\tilde{A}$, computing the augmented Gramian (i.e., solve \eqref{eq:lyap} with $\bar{A}= \text{blkdiag}(A, \tilde{A}))$ and solving \eqref{eq:set_similar} will characterize other structures that fall within the error defined by $A$ and $\tilde{A}$. 

We also see from \eqref{eq:set_similar} that for a given feasible observability Gramian, all networks share a component related to the observable subsystem of $\bar\Sigma$ given in \eqref{sys:aug}, and differ by the skew-symmetric matrix $S$ and the two free matrices $\check{A}_{21}$ and $\check{A}_{\bar o}$. To illustrate some of the quantitative trends, we consider two cases. If $\Lambda_{ii} \ll 1$, the fixed term will be dominant, thus
\begin{equation}
    \bar{A} \approx -\frac{1}{2}V_{o}\Lambda^{-1}\check{C}_o^\top \check{C}_oV_{o}^\top.
\end{equation}
This implies that only the observable term will contribute to the set of possible perturbed networks, which is consistent with our previous intuition from Section \ref{sec:identical} that the observable portions of the networks must be consistent.
Similarly we see that if we tolerate significant deviations in measurements, $\Lambda_{ii} \gg 1$ we have
\begin{equation}
    \bar{A}\approx
    V_{o} \Lambda^{-1/2}S\Lambda^{1/2}V_o^\top + V_{\bar{o}} \check{A}_{12}V_{o}^\top + V_{\bar{o}}\check{A}_{\bar{o}}V_{\bar{o}}^\top,
\end{equation}
which implies that there can be significant variability in the network structure due to the free variables' influence. 

\section{Simulations}\label{sec:simulations}

\subsection{Identical Measurements with Random Networks}

To explore the impact of partial measurement on network structure identification, we study edge misclassification in Erd\H os-R\' enyi and Watts-Strogatz random networks; the latter exhibits properties similar to brain networks \cite{BASSETT_2017}. We consider networks with $n=100$ nodes with edge presence probability $p = 1/6$ and edge flip probability $\beta = 1/30$. For the Watts-Strogatz networks, we connect each node to $K=3$ of its neighbors. We sample from a standard normal distribution to select edge weights, and randomly select which nodes to measure. 

We compute the maximally dissimilar networks (that exhibit identical dynamics) by solving \eqref{eq:LP}. In each case, we compute the percentage of edges that have been flipped (removed/added) relative to the original network; we deemed an edge present if the weight was greater $10^{-5}$. Results were averaged over $100$ trials with different random seeds and are depicted in Fig.~\ref{fig:random_networks}. When measuring under $6$ nodes, the most dissimilar networks effectively change the entire network structure without affecting the measurements. However, after observing greater than $6$ nodes, there is a significant decline in the percentage of edges flipped for both network models. This indicates a phase transition from completely ``unobservable'' networks to completely ``observable" networks.

\begin{figure}[thpb]
      \centering
      \includegraphics[clip, trim=0cm 0cm 0cm 0cm,width = 1\linewidth]{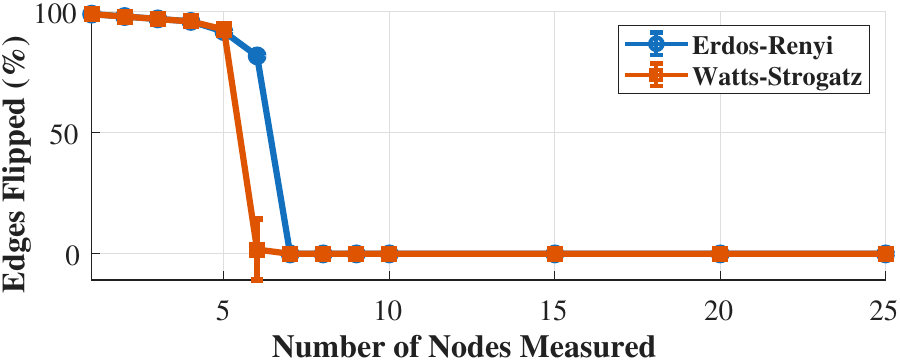}
      \caption{Percentage of edges flipped as more nodes are measured for Erd\H os-R\' enyi and Watts-Strogatz random networks.}
      \label{fig:random_networks}
   \end{figure}

% \begin{table}[h]
% \caption{Edges Flipped under Partial Observation}
% \label{tble:erdos}
% \begin{center}
% \begin{tabular}{|m{1.9cm}||m{1.05cm}|m{1.05cm}|m{1.05cm}|m{1.05cm}|}
% \hline
% Number of Nodes Measured & 1 & 10 & 25 & 50 \\
% \hline
% Erd\H os-R\' enyi &$89.30 \pm 19.78 \%$&$4.56 \pm 11.40 \%$&$0.69 \pm 0.78\%$ & $0.20 \pm 0.26 \%$ \\
% \hline
% Watts-Strogatz &$98.98 \pm 0.01 \%$&$0.00 \pm 0.00 \%$&$0.00 \pm 0.00\%$ & $0.00 \pm 0.00 \%$ \\
% \hline
% \end{tabular}
% \end{center}
% \end{table}

\subsection{$\epsilon$-Close Measurements}
To explore the implications of the program in \eqref{eq:min_similar}, we consider the following network and observation scheme
\begin{equation}
A = \begin{bmatrix}
    -3 & 1 & 0 \\
    0 & -3 & 0 \\
    1 & 0 & -3
\end{bmatrix}, ~~ C = \begin{bmatrix}
    1 & 0 & 0 
\end{bmatrix}.   
\end{equation}
For example's sake, we seek an exact solution; thus, instead of
solving nonconvex problem \eqref{eq:min_similar}, we work backwards. We fix $\bar{W}_o$ to ensure feasibility and temporarily remove the other constraint. Then, we find that $\sqrt{\lambda_{\max}(\bar{W}_o)} = 0.65$ and the corresponding maximally structurally dissimilar network is
\begin{equation}
    A + \Delta = \begin{bmatrix}
    -2 & 0 & 0.1 \\
    0 & 0 & 0 \\
    0.833 & 0 & -2
\end{bmatrix}.
\end{equation}
In Fig.~\ref{fig:similar_obsv}, we show the observed trajectories $y(t)$ and $\tilde{y}(t)$ for the original system and most dissimilar system, respectively. We selected the initial condition so that $\| x_0\|_2 = 1$. We see that the trajectories remain close over the entire simulation; the error norm is $\| e(t)\|_2 = 0.036$. This is less than $0.65\sqrt{2}$, which is the upper bound provided by Theorem 3. In Fig.~\ref{fig:similar_nets}, we display the corresponding network structures. We see that the general intuition developed in Section \ref{sec:identical} is violated as node 2, despite influencing measured node 1, has all connections removed; this is compensated by node 3's influence on node 1. For comparison, we also include the most dissimilar network that produces identical measurements; we see that allowing the measurements to be close instead of identical yields significant structural variations.

\begin{figure}[thpb]
      \centering
      \includegraphics[clip, trim=0cm 0cm 0cm 0cm,width = 1\linewidth]{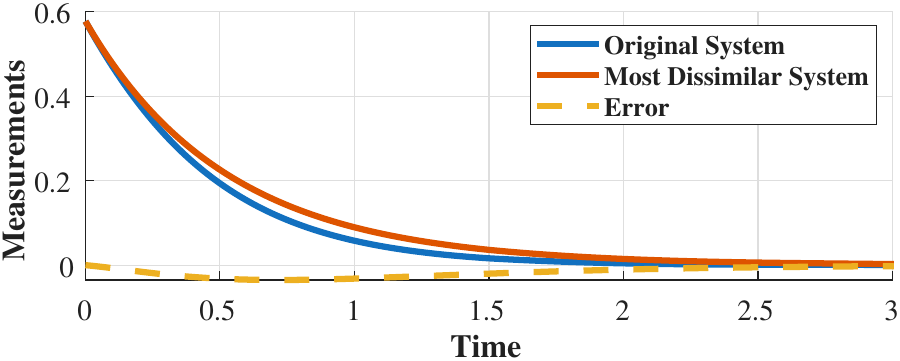}
      \caption{Comparison of the observed trajectories $y(t)$ and $\tilde{y}(t)$ and corresponding error $e(t)$ of the $\epsilon$-close network.}
      \label{fig:similar_obsv}
   \vspace{-1em}
   \end{figure}
\begin{figure}[thpb]      
      \centering
          \includegraphics[clip, trim=1.2cm 4.5cm 8.3cm 4cm, width = 0.8\linewidth]{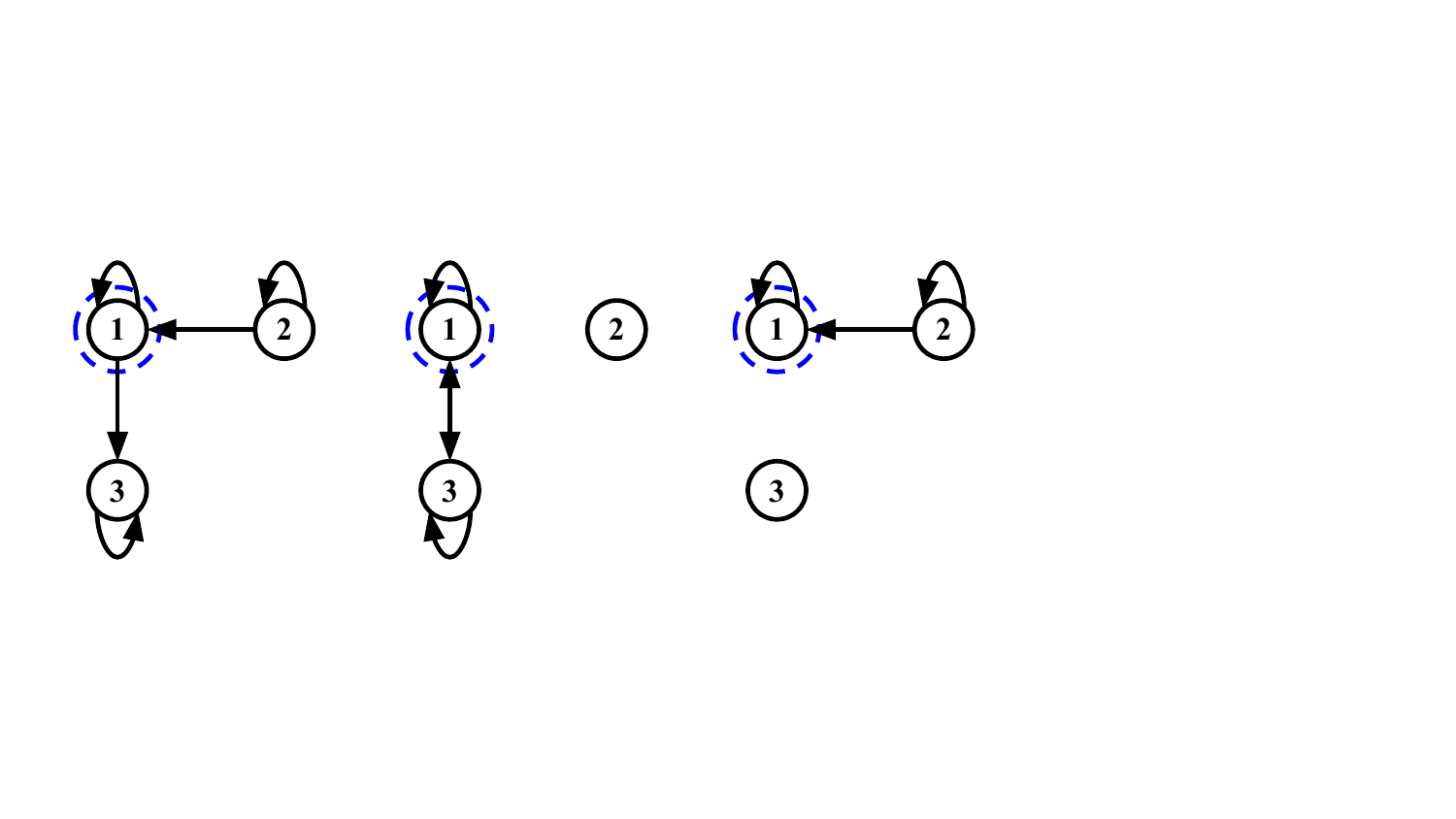}
            \caption{(Left) Original network. (Middle) The most structurally dissimilar network that generates similar measurements. (Right) The most structurally dissimilar network that generates identical measurements. Blue dashed circles indicate which nodes are measured.}\label{fig:similar_nets}
   \end{figure}

\section{Conclusions}\label{sec:conclusion}

In this work, we provided tools to understand the structural discrepancies that may arise during network inference from passive partial measurements.
These results provide a direct link between the set of networks that are consistent with observed measurements and the observability properties of the network in question. 
%While designing a measurement scheme to ensure observability properties \textit{a priori} is often infeasible, we also showed that random network structures can be almost fully recovered provided sufficiently many nodes are measured. 
Future work will utilize the presented results to design new inference algorithms that produce the set of all possible data-consistent networks as opposed to a single possible network.
%that take into consideration knowledge of the space of possible networks that are consistent with the measurements taken. 

% \begin{table}[h]
% \caption{An Example of a Table}
% \label{table_example}
% \begin{center}
% \begin{tabular}{|c||c|}
% \hline
% One & Two\\
% \hline
% Three & Four\\
% \hline
% \end{tabular}
% \end{center}
% \end{table}

%%%%%%%%%%%%%%%%%%%%%%%%%%%%%%%%%%%%%%%%%%%%%%%%%%%%%%%%%%%%%%%%%%%%%%%%%%%%%%%%

%%%%%%%%%%%%%%%%%%%%%%%%%%%%%%%%%%%%%%%%%%%%%%%%%%%%%%%%%%%%%%%%%%%%%%%%%%%%%%%%
%\newpage

\bibliographystyle{IEEEtran}

\bibliography{IEEEabrv,IEEEexample}

\addtolength{\textheight}{-12cm}   % This command serves to balance the column lengths
                                  % on the last page of the document manually. It shortens
                                  % the textheight of the last page by a suitable amount.
                                  % This command does not take effect until the next page
                                  % so it should come on the page before the last. Make
                                  % sure that you do not shorten the textheight too much.

%%%%%%%%%%%%%%%%%%%%%%%%%%%%%%%%%%%%%%%%%%%%%%%%%%%%%%%%%%%%%%%%%%%%%%%%%%%%%%%%
% \section*{Appendix}
% \subsection{Numerical Details for Section \ref{sec:simulations}}
% In order to find a feasible $\bar{W}_o$, we generated a random sparse matrix $Q$ and selected constants $\alpha = 1$, $\beta = 2$ and solved \eqref{eq:lyap} with $\bar{A}=\text{blkdiag}(A,  \alpha(Q - \beta I))$. Moreover, the simulations shown in Fig.~\ref{fig:similar_obsv} were run with $x_0 =\frac{1}{\sqrt{3}} \begin{bmatrix}
%     1 & 1 & 1
% \end{bmatrix}^\top$.

% \section*{ACKNOWLEDGMENT}

%%%%%%%%%%%%%%%%%%%%%%%%%%%%%%%%%%%%%%%%%%%%%%%%%%%%%%%%%%%%%%%%%%%%%%%%%%%%%%%%

\end{document}